\newcommand{\qvdots}{
  \raisebox{-.5em}[0pt][.25em]{\ensuremath{\vdots}}
}
\newcommand{\qddots}{
    \raisebox{0em}[1.5em][0pt]{\ensuremath{\ddots}}
}
\newtheorem{theorem}{Theorem}
\newtheorem{lemma}{Lemma}
\newcommand{\beginsupplement}{%
  \clearpage

  \setcounter{section}{0}%
  \setcounter{subsection}{0}%
  \setcounter{subsubsection}{0}%
  \setcounter{equation}{0}%
  \setcounter{figure}{0}%
  \setcounter{table}{0}%

  \setcounter{theorem}{0}%
  \setcounter{lemma}{0}%

  \renewcommand{\thesection}{S\arabic{section}}%
  \renewcommand{\thesubsection}{S\arabic{section}.\arabic{subsection}}%
  \renewcommand{\thefigure}{S\arabic{figure}}%
  \renewcommand{\thetable}{S\arabic{table}}%
  \renewcommand{\theequation}{S\arabic{equation}}%
}
\let\IEEEorig@maketitle\@maketitle
\let\IEEEorigthanks\thanks
\newcommand{\IEEEsecondmaketitle}{
  \par\begingroup
  \normalfont
  \def\thefootnote{}%
  \def\footnotemark{}%
  \let\@makefnmark\relax%
  \footnotesize
  \footnotesep 0.7\baselineskip
  \normalsize
  \if@twocolumn
    \twocolumn[{\IEEEquantizevspace{\IEEEorig@maketitle}[\IEEEquantizedisabletitlecmds]{0pt}[-\topskip]{\baselineskip}{\@IEEENORMtitlevspace}{\@IEEEMINtitlevspace}\@IEEEaftertitletext}]%
  \else
    \newpage\global\@topnum\z@ \IEEEorig@maketitle\@IEEEstatictitlevskip\@IEEEaftertitletext%
  \fi
  \thispagestyle{IEEEtitlepagestyle}\@thanks%
  \endgroup
  \setcounter{footnote}{0}%
  \gdef\@thanks{}%
  \let\thanks\relax%
}
\def\ps@IEEEtitlepagestyle{%
  \def\@oddhead{\mycopyrightnotice}%
  \def\@evenhead{}%
}
\def\mycopyrightnotice{%
  \parbox{\textwidth}{%
    \scriptsize     
    \copyright~2026 IEEE. Personal use of this material is permitted. Permission from IEEE must be obtained for all other uses, including reprinting/republishing this material for advertising or promotional purposes, collecting new collected works for resale or redistribution to servers or lists, or reuse of any copyrighted component of this work in other works. %
  }%
}
\begin{document}

\title{Efficient Quantum Circuits for \\ the Hilbert Transform}

\author{Henry~Zhang and Joseph Li
\thanks{H. Zhang is with Jericho Senior High School, NY 11753, USA. (e-mail: henry-yunheng.zhang@jerichoschools.org).}
\thanks{J. Li is with the Department of Computer Science and the Joint Center for Quantum Information and Computer Science (QuICS), University of Maryland, College Park, MD 20742, USA (e-mail: jli0108@umd.edu).}}

\IEEEaftertitletext{\vspace{-1.05\baselineskip}}
\maketitle

\begin{abstract}
The quantum Fourier transform and quantum wavelet transform have been cornerstones of quantum information processing. However, for non-stationary signals and anomaly detection, the Hilbert transform can be a more powerful tool, yet no prior work has provided efficient quantum implementations for the discrete Hilbert transform. This letter presents a novel construction for a quantum Hilbert transform in polylogarithmic size and logarithmic depth for a signal of length $N$, exponentially fewer operations than classical algorithms for the same mapping. We generalize this algorithm to create any $d$-dimensional Hilbert transform in depth $O(d\log N)$. Simulations demonstrate effectiveness for tasks such as power systems control and image processing, with exact agreement with classical results.
\end{abstract}

\begin{IEEEkeywords}
Hilbert transform, quantum information processing, quantum spectral filtering, quantum algorithms.
\end{IEEEkeywords}

\IEEEpeerreviewmaketitle

\section{Introduction}

\IEEEPARstart{O}{ver} the past few decades, the demand for high-performance computing has ballooned alongside exponentially growing datasets. Recent advances in machine learning and large-scale, high-precision simulations have exacerbated this trend as computational costs now reach exascales. However, at the same time, the transistors underpinning classical computers are now reaching atomic scales, so the era of predictable increases in computational power described by Moore's law appears to be coming to an end \cite{Shalf2020Future}.

Quantum computing has emerged as a solution to efficiently store and process high-dimensional data. Qubits, the unit of quantum information, can store $N$ data points using $\log_2 N$ qubits, and quantum entanglement and superposition allow for inherent parallelization. Two of the most important algorithms in signal processing, the discrete Fourier transform (DFT) and discrete wavelet transform, have well-described quantum analogues, known to be exponentially faster as a subroutine when compared to their classical counterparts \cite{Shor1994, Li2018QWT}. Growing literature has demonstrated their widespread utility in quantum algorithms for phase estimation, computational chemistry, information processing, and more \cite{Harrow2009HHL, Ouedrhiri2021ML, Dutta2024Denoising, Li2017Filtering}. Recent work has also explored quantum formulations of discrete trigonometric, wave atom, Mellin, and curvelet transforms \cite{Klappenecker2001DCT, Podzorova2025Waveatom, Twamley2006Mellin, Liu2009Curvelet}, yet one of the most fundamental operators in classical signal analysis, the Hilbert transform, remains essentially unexplored.

By generating a signal with phase orthogonal to its input, the Hilbert transform can be used for envelope detection, feature extraction, and phase alignment \cite{Liu2012Hilbert} in fields such as power systems control \cite{Derviskadic2020BeyondPhasors}, seismology \cite{Purves2014Phase}, and climate modeling \cite{Zappal2016Atmosphere, Rieger2021Climate}. Its extraction of instantaneous information from arbitrary signals can be more useful than the global estimation of the Fourier transform, and comes without the need for basis function selection of wavelet transforms. The two-dimensional Hilbert transform extends to image processing, where its properties can be used for edge detection, envelope detection, and corner detection \cite{LorenzoGinori2dHT, Kohlmann1996Corner}, and the three-dimensional Hilbert transform has applications including seismic imaging and microscopy \cite{Zhang20193d, Arnison20003d}. 

To compute the discrete Hilbert transform classically, the convolution theorem is invoked to represent the transformation as pointwise multiplication after the fast Fourier transform. However, deterministic pointwise multiplication of arbitrary quantum states is provably impossible because of the linearity of quantum mechanics \cite{Lomont2003Convolution}, and arbitrary convolutions on quantum computers achieve success probability at most $O(1/N)$ for length-$N$ state vectors \cite{Holmes2023Nonlinear,Pfeffer2024Comment}, rendering naïve approaches highly infeasible.

This letter presents an efficient quantum algorithm for the discrete Hilbert transform by exploiting the relatively structured nature of the Hilbert transform in the frequency domain: the removal of the zero-frequency DC component of the signal followed by quadrature phase shifts. We show that suppressing the DC mode can be achieved simply without the need for deterministic pointwise multiplication by isolating the zero-frequency amplitude onto an ancilla qubit using a sequence of multi-controlled-X gates and measurement. Moreover, the Hilbert transform's phase shifts are implemented in constant depth by applying Pauli-Z gates to the most significant qubit of each register, which precisely encodes the sign bit of each frequency component. In $d$ dimensions, our algorithm's depth scales logarithmically in $N$ and linearly in $d$ as $O(d \log N)$, compared to the $O(N^d \log N^d)$ scaling of the FFT, showing significant promise for multidimensional signal processing.

The remainder of this letter is structured as follows: Section II details our algorithm for the $d$-dimensional quantum Hilbert transform. Section III analyzes the time complexities of the algorithms. Section IV discusses potential applications for the one and two-dimensional quantum Hilbert transforms, and numerical comparisons to classical results. Finally, Section V concludes with potential directions for future research.

\section{Quantum Hilbert Transform}
\subsection{Preliminaries}
The Hilbert transform of a sufficiently smooth \cite{Arcos2021Hilbert} real function $f(t)$ is given by the integral
\begin{equation}
\label{eq:analytic}
\mathcal{H}f(t) = \frac{1}{\pi}\lim_{\epsilon \to 0^+} \int_{|\tau-t|>\epsilon} \frac{f(\tau)}{t-\tau} \text{d}\tau.
\end{equation}
This integral can be rewritten as the convolution $f * \frac{1}{\pi t}$ in the sense of tempered distributions, which, by the convolution theorem, has the frequency domain representation
\begin{equation}
\widehat{\mathcal{H}f}(\xi) = -i\kern0.05em\text{sgn}(\xi)\hat f(\xi).
\end{equation}
The factor $-i\kern0.05em\text{sgn}(\xi)$ acts as a frequency-domain filter that shifts the phase of positive frequencies by $-\pi/2$ radians and negative frequencies by $+\pi/2$ radians, while removing the DC component, which corresponds to when $\xi = 0$.

For a discrete $f \in \mathbb{R}^N$, where, without loss of generality $N = 2^n\in \mathbb{N}$, discrete Hilbert transform algorithms perform the analogous operation using the DFT:
\begin{equation}
\label{eq:dht}
\mathcal{H}[f]_k = \frac{1}{N} \sum_{w=0}^{N-1} -i\,\text{sgn}(w)\left( \sum_{j=0}^{N-1} f_j e^{-2\pi i j w / N} \right) e^{2\pi i k w / N},
\end{equation}
for frequency indices $w \in \{0,1,\ldots, N-1\}$. Note that although the continuous Hilbert transform on $L^2(\mathbb R)$ is a unitary operation, finite-$N$ discrete Hilbert transforms are not, because the DC bin of the DFT is projected out.

We consider the discrete Hilbert transform generalized to $d$ dimensions, which extends the same form along each dimension. Let $f \in \mathbb R^{N\times\cdots \times N}$ henceforth be a real, order-$d$ tensor. The multidimensional transform $\mathcal{H}[f]_{k_1,\ldots,k_d}$ is
\begin{equation}
\label{eq:d-dht}
\frac{1}{N^d} \sum_{w_1,\ldots,w_d=0}^{N-1} \prod_{m=1}^d\left(-i\text{sgn}(w_m)\right)\hat{f}_{w_1,\ldots,w_d}e^{2\pi i\sum_{j=1}^{d}k_jw_j/N},
\end{equation}
with Fourier coefficients
\begin{equation}
\hat{f}_{w_1,\ldots,w_d} = \sum_{j_1,\ldots, j_d=0}^{N-1}f_{j_1,\ldots,j_d}e^{-2\pi i\sum_{m=1}^dj_mw_m/N}.
\end{equation}

\subsection{Quantum Algorithm}

We construct quantum circuits for any $d$-dimensional discrete Hilbert transform, where the input tensor $f$ is Frobenius-normalized and represented as the quantum state
\begin{equation}
\ket{f} = \sum_{k_1,\ldots,k_d=0}^{N-1} f_{k_1,\ldots,k_d}\ket{k_1\cdots k_d},
\end{equation}
encoded in $d$ registers each of $n$ qubits, with $\ket{k_1\cdots k_d} = \ket{k_1} \otimes\cdots\otimes \ket{k_d}$. An ancilla, separate from the data registers, is prepended as the most significant qubit to make the initial state $\ket{0}\ket{f}$. Our algorithm is then as follows:

(i) First, an $n$-qubit QFT is applied in parallel on each of the $d$ registers. Because of the separability of the multidimensional Fourier transform \cite{Pfeffer2023QFT}, this yields
\begin{equation}
\frac{1}{N^{d/2}}\sum_{j_1,\ldots,j_d=0}^{N-1}\hat{f}_{j_1,\ldots,j_d}\ket{0}\ket{j_1\cdots j_d},
\end{equation}
with Fourier coefficients defined by
\begin{equation}
\label{eq:d-fouriercoeff}
\hat{f}_{j_1,\ldots,j_d} = \sum_{k_1,\ldots,k_d=0}^{N-1} f_{k_1,\ldots,k_d} e^{2\pi i \sum_{m=1}^d j_m k_m /N}.
\end{equation}

(ii) After the QFT, we wish to remove the DC component, which corresponds to any amplitude with at least one register in the basis state $\ket{0^n}$. To begin, for each register $r$, we apply a multi-controlled X (MCX) targeting the ancilla, with every qubit of $r$ as a negative control, which flips the ancilla if and only if the register equals $\ket{0^n}$. Then, immediately after each MCX, we measure the ancilla in the computational basis and reset it to $\ket{0}$. Conditioned upon the ancilla in $\ket{0}$ each time, the disjoint measurements have the effect:
\begin{equation}
    \Pi_{\text{keep}}
    = \bigotimes_{r=1}^d \bigl(I_r - \Pi^{(r)}_{0}\bigr)
    = \sum_{\substack{(j_1,\dots,j_d)\\ \forall r: j_r\neq 0}}
    \ket{j_1\cdots j_d}\bra{j_1\cdots j_d}.
\end{equation}
Applying $\Pi_\text{keep}$ to the state (then omitting the ancilla) yields
\begin{equation}
\label{eq:measure2}
\frac{1}{N^{d/2}\sqrt{1-p}}\sum_{\substack{(j_1,\dots,j_d) \\ \forall r:j_r \neq 0}} \hat f_{j_1,\ldots,j_d}\ket{j_1\cdots j_d},
\end{equation}
removing the DC bin and allowing us to work around the non-unitarity of the discrete Hilbert transform. Here, $1-p=\mathbb P[\text{success}] = \|\Pi_\text{keep}|\hat f\rangle\|^2$ is the success probability of the postselection; in Lemma 2 we show that $p$ equals the fraction of the spectral energy contained in the DC component.

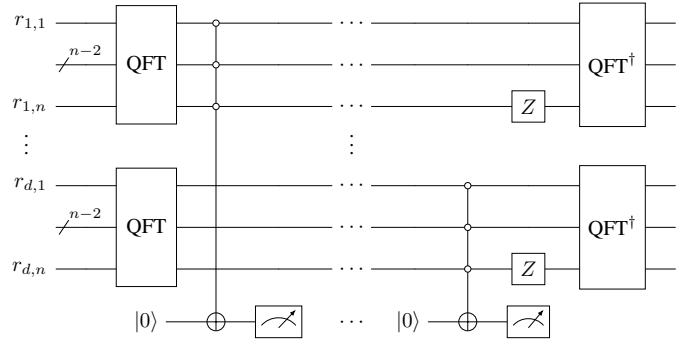
\begin{figure}[t]
    \hspace*{-.8em}
    \scalebox{0.8}{
    \begin{quantikz}[classical gap=0.06cm, row sep=0.8em, wire types={q,q,q,n,q,q,q,n}, thin lines]
        \lstick{$r_{1, 1}$} && \gate[3]{\kern0.08em\text{QFT}\kern0.08em} & \octrl{1} &&\ \cdots \ & &&&\gate[3]{\text{QFT}^\dagger}&\\
        &\qwbundle[style={xshift=-1.2mm}]{n-2}&& \octrl{1} && \ \cdots \ &&&&& \\
        \lstick{$r_{1, n}$} &&& \octrl{5} && \ \cdots \ &&&\gate{Z}&& \\
        \lstick{\qvdots\kern0.5em} & & & & & \push{\qvdots}&& \\
        \lstick{$r_{d, 1}$} && \gate[3]{\kern0.08em\text{QFT}\kern0.08em} &&&\ \cdots \ && \octrl{1} & & \gate[3]{\text{QFT}^\dagger} & \\
        &\qwbundle[style={xshift=-1.2mm}]{n-2} &&&&\ \cdots \ &&\octrl{1} & & &\\
        \lstick{$r_{d, n}$} &&&&&\ \cdots \ &&\octrl{1} & \gate{Z} & &\\
        && \ \push{{\ket{0}}}\ & \targ{} \setwiretype{q} &  \meter{} \hspace{6pt} & \setwiretype{n}  \ \cdots \ &  \ \hspace{-6pt} \push{{\ket{0}}}\  \setwiretype{n}& \targ{} \setwiretype{q} & \meter{}  &  \setwiretype{n} \push{\kern1.75em} \\
    \end{quantikz}
    }
    \caption{The circuit diagram of the $d$-dimensional quantum Hilbert transform. Each register is $n$ qubits, with an ancilla as the most significant qubit.}
    \label{fig:circuit}
\end{figure}

(iii) To implement the phase shifts of the Hilbert transform, we apply a Pauli‑Z gate on the most significant qubit of every register. Letting $b_r$ be the most significant bit of register $r$, the action of the Pauli-Z on $\ket{b_r}$ is $(-1)^{b_r}\ket{b_r}$. Applying a Pauli-Z on the most significant qubit of all $d$ registers therefore multiplies $\ket{j_1\cdots j_d}$ by 
\begin{equation}
\prod_{r=1}^d(-1)^{b_r} = (-1)^{\sum_{r=1}^d b_r} = (-1)^{\sum_{r=1}^d \left\lfloor \frac{j_r}{2^{n-1}}\right\rfloor},
\end{equation}
which is exactly the sign pattern of \eqref{eq:d-dht} under the standard frequency ordering of the DFT, where the first $N/2=2^{n-1}$ frequencies are nonnegative (DC and positive), and the next $2^{n-1}$ are negative. The state after this step is thus
\begin{equation}
\frac{1}{N^{d/2}\sqrt{1-p}}\sum_{\substack{(j_1,\dots,j_d) \\ \forall r:j_r \neq 0}} (-1)^{\sum_{r=1}^d \left\lfloor \frac{j_r}{2^{n-1}}\right\rfloor} \hat f_{j_1,\ldots,j_d}\ket{j_1\cdots j_d}.
\end{equation}

(iv) Finally, we employ an $n$-qubit inverse QFT in parallel on each of the $d$ registers to get
\begin{equation}
\ket{\mathcal{H}[f]} =\frac{1}{N^d\sqrt{1-p}}\sum_{k_1,\ldots,k_d=0}^{N-1}f'_{k_1,\ldots,k_d}\ket{k_1\cdots k_d}.
\end{equation}
This is the normalized, $d$-dimensionally Hilbert-transformed state up to an unobservable global phase $(-i)^d$, where $f'_{k_1,\ldots,k_d}$ is
\begin{equation}
\sum_{\substack{(j_1,\dots,j_d)\\
            \forall r:j_r\neq 0}}
(-1)^{\sum_{r=1}^{d}\left\lfloor \frac{j_r}{2^{n-1}}\right\rfloor}
\hat f_{j_1,\ldots ,j_d}
e^{-2\pi i\sum_{l=1}^{d} j_l k_l/N}.
\end{equation}

\section{Complexity Analysis}
We show that: (1) our quantum Hilbert transform (QHT) construction has logarithmic depth in terms of $N$ and uses only $dn+1$ qubits; (2) the acceptance probability of the postselection (success probability of step (ii)) is $1-p$; and (3), up to a quadratic end-to-end speedup can be obtained for estimating any component of the output.

\begin{lemma}
Let $f \in \mathbb R^{N\times\cdots\times N}$ be a real order-$d$ tensor. There exists a quantum circuit which maps $\ket{f} \to \ket{\mathcal{H}[f]}$ with circuit size $O(dn^2)$ and depth $O(dn)$, using $dn+1$ qubits.
\end{lemma}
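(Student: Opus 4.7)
The plan is to verify that each of the four steps of the construction fits within the claimed resource budgets, then sum the contributions. The qubit count is immediate by inspection: the $d$ data registers contribute $dn$ qubits and the single ancilla contributes one, for a total of $dn+1$.

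For steps (i) and (iv), I would invoke the standard fact that an $n$-qubit QFT has size $O(n^2)$ and depth $O(n)$. Since the $d$ QFTs (and inverse QFTs) act on disjoint registers, they execute in parallel, so each of the two QFT layers contributes $O(dn^2)$ gates in depth $O(n)$. Step (iii) is trivial: applying a Pauli-Z on the most significant qubit of each register touches disjoint qubits, so it is fully parallelizable and contributes $O(d)$ size and $O(1)$ depth.

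The main technical point is step (ii). Each $n$-controlled-$X$ with negative controls targeting the ancilla can be decomposed without auxiliary clean ancillas into $O(n^2)$ elementary gates in depth $O(n)$ via the standard Barenco et al. construction (absorbing each negative control as an $X$-conjugation on that control line). Because all $d$ such MCXs share the ancilla as their target, they cannot be parallelized and must be applied sequentially; the interposed single-qubit measurements and resets add only $O(1)$ depth each. This step therefore contributes $O(dn^2)$ size and $O(dn)$ depth.

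Summing across all four steps, the total gate count is $O(dn^2)$ and the total depth is $O(dn)$, matching the statement. The only point demanding care is the ancilla-free decomposition of the multi-controlled $X$ and the fact that the $d$ copies cannot be parallelized because they share a target; once the standard MCX construction is cited, the remaining accounting is routine.
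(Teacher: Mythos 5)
Your proposal is correct and follows essentially the same route as the paper's own proof: cite the $O(n^2)$-size, $O(n)$-depth QFT and the ancilla-free $O(n^2)$-size, $O(n)$-depth multi-controlled-$X$ decompositions, note the parallelism of the QFTs and Pauli-$Z$s across disjoint registers, and sum the contributions. Your explicit remark that the $d$ MCX gates must be sequential because they share the ancilla target is a correct (and slightly more careful) articulation of a point the paper leaves implicit.
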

\begin{proof}
    See the Supplemental Material.
\end{proof}

\begin{lemma}
Let $\ket{f} \in \mathbb C^{N^d}$ have Fourier coefficients $\hat{f}_{j_1,\ldots,j_d}$ defined in Equation \eqref{eq:d-fouriercoeff} and spectral energies $S_{j_1,\ldots,j_d} = |\hat{f}_{j_1,\ldots,j_d}|^2$. Suppose the DC component satisfies
\begin{equation}
    \sum_{\substack{(j_1,\dots,j_d) \\ \exists m:j_m =0}}S_{j_1,\ldots,j_d} = p\sum_{j_1,\ldots,j_d=0}^{N-1} S_{j_1,\ldots,j_d}
\end{equation}
for some $p \in [0, 1]$. Then, $\mathbb{P}[\textnormal{success}] = 1-p$.
\end{lemma}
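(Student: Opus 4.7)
The plan is to compute $\mathbb{P}[\textnormal{success}]$ directly from the state produced by step (i) and then identify it with $1-p$ using Parseval's identity. After the parallel QFT, the data registers carry the normalized state $\ket{\hat f} = N^{-d/2}\sum_{j_1,\ldots,j_d}\hat f_{j_1,\ldots,j_d}\ket{j_1\cdots j_d}$; unitarity of the QFT together with $\|\ket{f}\|^2=1$ immediately gives the global normalization $\sum_{j_1,\ldots,j_d} S_{j_1,\ldots,j_d} = N^d$, which will be the one nontrivial identity I need.

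Next, I would argue that the sequence of $d$ MCX--measure--reset subroutines in step (ii), conditioned on every ancilla outcome being $\ket{0}$, realizes the tensor-product projector $\Pi_{\textnormal{keep}}$ on the data registers. Each subroutine entangles the ancilla with register $r$ in the computational basis and then, conditioned on outcome $\ket{0}$, acts as $I_r - \Pi_0^{(r)}$ on that register; since these factors act on disjoint registers they commute and compose to the $\Pi_{\textnormal{keep}}$ written in equation~(9). The Born rule then yields $\mathbb{P}[\textnormal{success}] = \|\Pi_{\textnormal{keep}}\ket{\hat f}\|^2$.

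Finally, because $\Pi_{\textnormal{keep}}$ is diagonal in the computational basis, this norm squared reduces to a sum of $S_{j_1,\ldots,j_d}/N^d$ over tuples with every $j_r\neq 0$. Writing that as $1$ minus the complementary sum over tuples with some $j_m=0$, and substituting the hypothesis (which asserts that this complementary sum equals $pN^d$), gives $\mathbb{P}[\textnormal{success}] = 1-p$. The only real subtlety is the second step --- verifying that the sequence of disjoint mid-circuit measurements is, on the success branch, equivalent to a single tensor-product projection; the remainder is Parseval together with a set-complement identity.
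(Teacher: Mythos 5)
Your proposal is correct and follows essentially the same route as the paper's proof: compute the post-QFT probability mass $N^{-d}S_{j_1,\ldots,j_d}$, identify the conditioned measurements with the projector $\Pi_{\textnormal{keep}}$, and finish with Parseval plus the complementary-sum identity. Your explicit justification that the $d$ commuting mid-circuit projections compose into the single tensor-product projector is a point the paper simply asserts, but it is the same argument.
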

\begin{proof}
    See the Supplemental Material. 
\end{proof}

\begin{theorem}
    Let $f \in \mathbb R^{N\times\cdots\times N}$ be a real order-$d$ tensor. There exists a quantum algorithm which estimates a chosen component of $|\mathcal H[f]|^2$ within an additive error $\epsilon \in (0, 1)$ with an end-to-end time complexity of $O(nd/(\epsilon^2(1-p)))$ using sampling, or $O(nd/\epsilon)$ with amplitude estimation.
\end{theorem}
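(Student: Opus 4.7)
The plan is to assemble the estimator from the preparation circuit of Lemma~1 together with either classical sampling or quantum amplitude estimation, using Lemma~2 to convert the postselection into a cost factor.

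For the sampling bound, I would run the circuit of Lemma~1 and retain only the trials in which every postselection in step~(ii) succeeds. By Lemma~2 each attempt succeeds with probability $1-p$, so producing one valid sample requires $O(1/(1-p))$ expected attempts, each of depth $O(nd)$. Conditional on success, measuring the $d$ data registers in the computational basis yields outcome $(k_1,\ldots,k_d)$ with probability exactly the chosen squared component $|\langle k_1\cdots k_d|\mathcal{H}[f]\rangle|^2$. A Hoeffding bound on the empirical frequency then shows that $O(1/\epsilon^2)$ successful samples suffice to reach additive error $\epsilon$ with constant confidence, and multiplying the three factors gives the stated $O(nd/(\epsilon^2(1-p)))$ complexity.

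For the amplitude-estimation bound, I would first promote the preparation circuit of Lemma~1 to a coherent unitary $U$ of depth $O(nd)$ by applying the principle of deferred measurement: each intermediate measurement in step~(ii) is replaced by a CNOT copy onto a fresh flag ancilla, and the reset is dropped. On input $\ket{0}$, $U$ outputs a state whose ``success'' branch carries the Hilbert-transformed amplitudes together with flags in the all-zero configuration. Marking the subspace defined jointly by the success flags and the target basis state $\ket{k_1\cdots k_d}$ gives a projector whose expectation is the desired component of $|\mathcal{H}[f]|^2$. Quantum amplitude estimation in the standard Brassard--H{\o}yer--Mosca--Tapp form then estimates this expectation to additive error $\epsilon$ with $O(1/\epsilon)$ controlled calls to $U$ and $U^\dagger$, giving the claimed $O(nd/\epsilon)$ depth.

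The step I expect to require the most care is this lift to a unitary: the chain of MCX/measure/reset cycles in step~(ii) must be deferred in parallel so that the overall depth remains $O(nd)$, and the marked subspace must be chosen so that amplitude estimation returns precisely the indexed component and not a rescaled variant involving $(1-p)$. Once the coherent version of the preparation is in place, both branches of the argument reduce to the standard concentration bound and the textbook amplitude-estimation guarantee.
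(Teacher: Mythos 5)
Your proposal is correct and follows essentially the same route as the paper: Lemma~1 for the $O(dn)$ circuit depth, Lemma~2 and an expected $O(1/(1-p))$ repetitions for the sampling bound with a standard concentration inequality, and Brassard--H{\o}yer--Mosca--Tapp amplitude estimation on a coherent (deferred-measurement) version of the circuit for the $O(nd/\epsilon)$ bound, with the extra flag ancillas you describe matching the paper's remark that up to $d-1$ additional ancillas are needed when measurements cannot be deferred and reused. The $(1-p)$ rescaling subtlety you flag in the amplitude-estimation branch is real but is not resolved in the paper's own proof either, so it does not constitute a departure from the paper's argument.
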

\begin{proof}
In this letter, we follow the assumption that the state preparation is efficient, i.e., it can be achieved in polylogarithmic depth with the aid of ancillas \cite{Zhang2022QSP, Sun2023QSP}, it is easily preparable, like a matrix product state \cite{PhysRevLett.95.110503}, or, most likely, it is already provided given that the QHT is used as a subroutine for a larger quantum algorithm.

By Lemma 1, the depth of our quantum circuits are $O(dn)$, with success probability $1-p$ by Lemma 2 (with $p$ dependent upon the DC component of the input state), meaning we expect $1/(1-p)$ trials until the first success. Sampling trials to estimate probabilities scales as $O(1/\epsilon^2)$ by Chebyshev's inequality. Hence, estimating a value of $|\mathcal H[f]|^2$ via sampling requires time $O(nd/(\epsilon^2(1-p)))$.

For a better asymptotic bound, we can leverage amplitude estimation \cite{Brassard2002}, embedding the QHT into a Grover-like operator so that estimating the amplitude to $\epsilon$ requires $O(1/\epsilon)$ calls to the subroutine rather than $O(1/\epsilon^2)$, while additionally removing the need for postselecting $\ket{0}$ as in Lemma 2. This gives the time complexity $O(nd/\epsilon)$, although requiring up to $d-1$ more ancillas for the same depth because the ancillas cannot be reset and reused with mid-circuit measurements.
\end{proof}

Using amplitude estimation gives up to a quadratic speedup compared to the fastest classical algorithms, while maintaining the reduced discretization errors that come from being able to analyze a larger dataset. For a $d$-dimensional discrete Hilbert transform, the fastest known classical algorithms for computing the entire output $\mathcal H[f]$ have size $O(N^d \log N)$ using parallel fast Fourier transforms \cite{Liu1981Numerical} or similar algorithms for discrete trigonometric transforms, which share the same time complexity \cite{Bilato2014Fast, Micchelli2011Spline}. For estimating $k \ll N^d$ components of $|\mathcal H[f]|$, the fastest known classical cost remains $O(kN^d)$, by directly computing the discrete convolution sum \cite{Liu1981Numerical}.

In comparison, our quantum end-to-end cost to estimate $k$ amplitudes is $O(kd\log N/\epsilon)$. For fixed $\epsilon$, this is exponentially faster in $N$, and even for $\epsilon$ growing proportionally to amplitudes, we obtain a quadratic speedup over classical methods. We emphasize that the QHT always maps $\ket{f}\to\ket{\mathcal{H}[f]}$ in logarithmic depth and polylogarithmic size, which is exponentially fewer operations than the classical implementation of the Hilbert transform. Consequently, the QHT may also yield exponential speedups in quantum algorithms that use the Hilbert transform as a subroutine, but do not require readout of all components of the output vector $\ket{\mathcal{H}[f]}$.

\begin{table}[]
\caption{Asymptotic complexities of classical discrete Hilbert transforms and our quantum approach.}
\begin{tabular}{l|l|l}
           & {Classical} (best known) & {Quantum} (this work) \\\hline
Transform cost  & $O(N^d \log N)$          & $O(d \log^2 N)$      \\
End-to-end cost & $O(\min(N^d\log N^d, kN^d))$    &     $O(kd\log N/\epsilon)$         \\  
\end{tabular}
\end{table}

\section{Numerical Results}

The efficacy of multidimensional QHTs is tested and verified in three cases using the IBM Qiskit simulator.
\begin{figure*}[!t]
    \centering
    \addtocounter{figure}{1}
    \includegraphics[width=\linewidth]{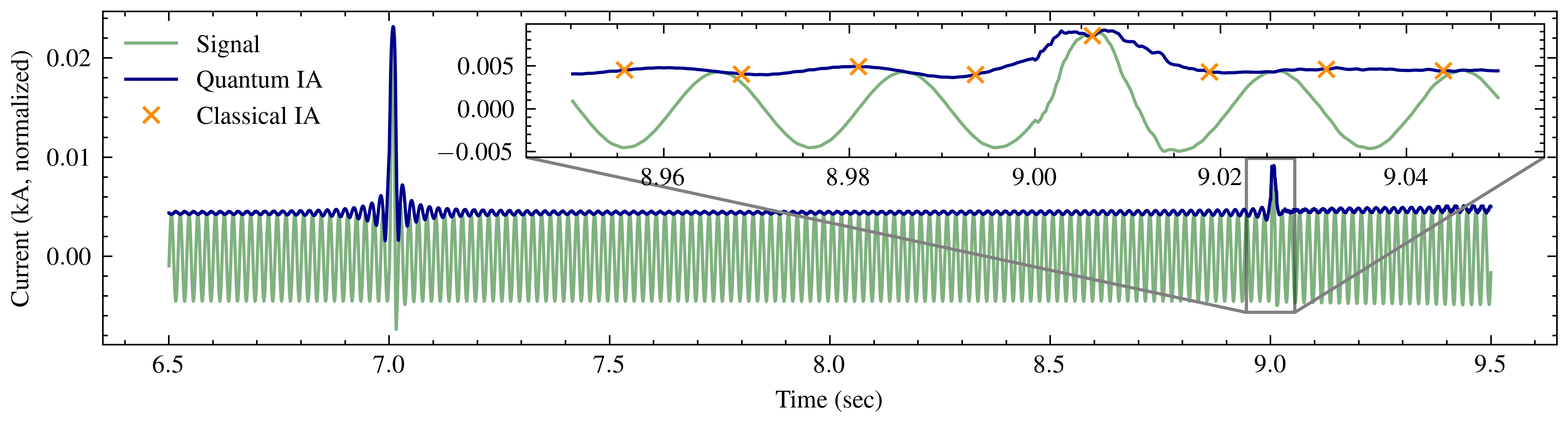}
    \caption{The quantum Hilbert transform precisely captures fast multi-fault events in HVDC grids.} 
    \label{fig:power-fig}
\end{figure*}
\begin{figure}[H]
    \centering
    \addtocounter{figure}{-2}
    \includegraphics[width=\linewidth]{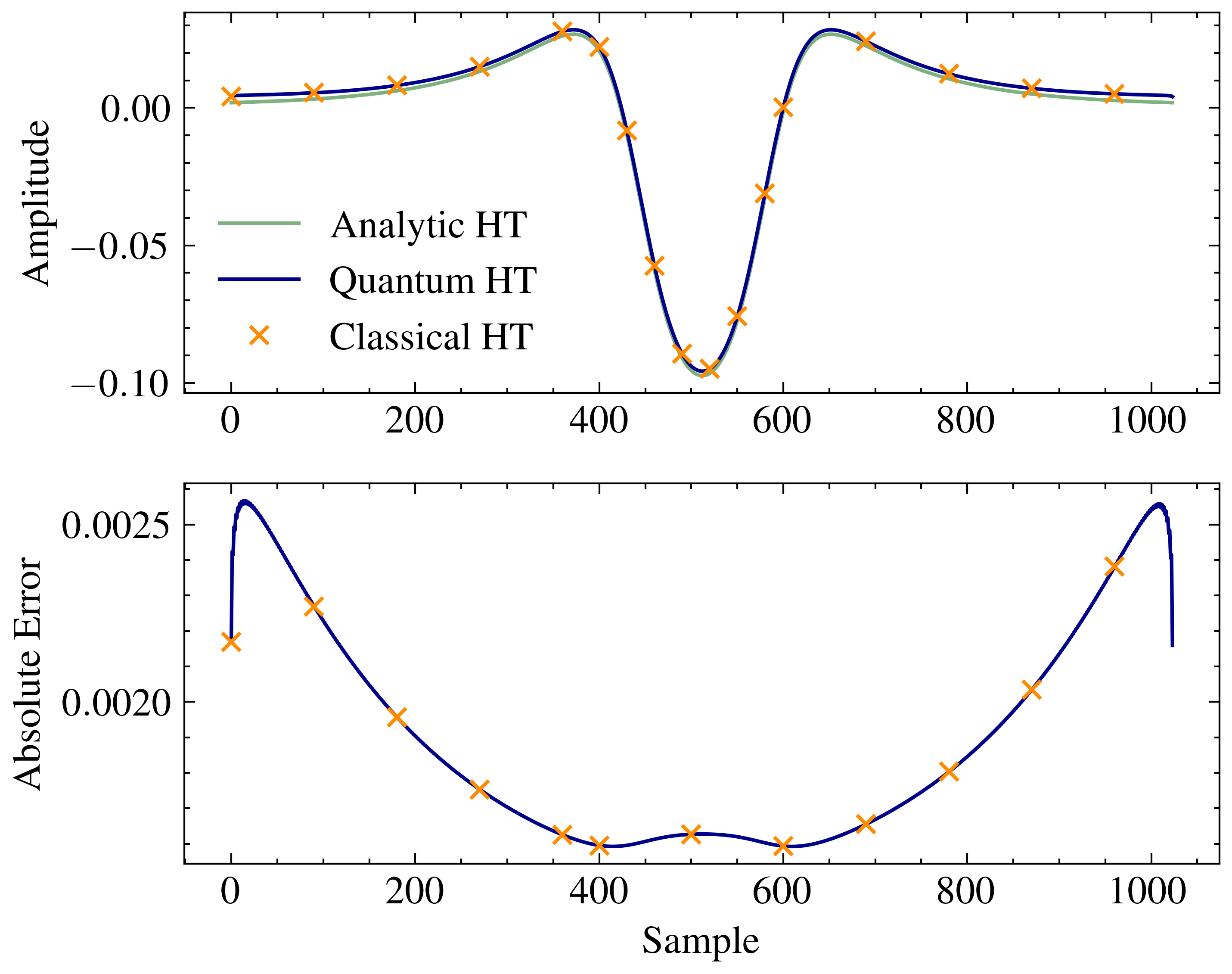} 
    \caption{The quantum Hilbert transform agrees exactly with classical results calculated using the fast Fourier transform for $f(x) = \sin(x)/(1+x^4)$.}
    \label{fig:analytic-fig}
\end{figure}

\subsection{Methodology Validity}
In Figure \ref{fig:analytic-fig}, we choose the standard test case $f(x) = \sin(x)/(1+x^4)$ from \cite{Bilato2014Fast}, for its combination of oscillatory and decay behavior, with the known analytic Hilbert transform
\begin{equation}
    \mathcal Hf(x) = \frac{e^{\frac{-1}{\sqrt{2}}} \cos \frac{1}{\sqrt{2}}+e^{\frac{-1}{\sqrt{2}}} \sin \frac{1}{\sqrt{2}} x^2-\cos x}{1+x^4}
\end{equation}
as the ground truth. We sample $f(x)$ at $N=128$ points uniformly in an interval centered about $x=0$ with step size $h=0.01$, and compute the discrete transform via both FFT and a QHT circuit on 7 data qubits plus one ancilla. 

To quantify the agreement between results, we compute the fidelity $F= |\braket{\psi_\text{ref}}{\mathcal H[f]}|^2$, a standard metric in quantum information theory, where $\psi_\text{ref}$ is the classically prepared reference, normalized for comparison. The QHT output fidelity agrees with the classical FFT-based result to floating-point accuracy, i.e., $F>1-1.5\times10^{-10}$, demonstrating perfect correspondence.

\subsection{Capturing Fast Transients}
Figure \ref{fig:power-fig} applies the QHT to capturing the dynamic phasor information of fault currents and voltages described by \cite{Derviskadic2020BeyondPhasors}. This process is essential for fault identification, localization, and mitigation in modern power grids to prevent blackouts. We demonstrate the QHT's capability on amplitude-encoded data from a large 9‑terminal HVDC-380 kV AC grid \cite{jovcic2021dc}, in a scenario with two pole-to-ground faults occurring at two terminals of DC converters subsequently at $t\approx\SI{7}{\second}$ and $t\approx\SI{9}{\second}$ and lasting \SI{2}{\milli\second} and \SI{5}{\milli\second} respectively. The instantaneous amplitude (IA), extracted through the discrete Hilbert transform, reveals abrupt deviations in both current waveforms that coincide with the fault onset times, enabling accurate temporal and spatial localization of fault events.

Remarkably, with one extra ancilla used for recursive MCX decomposition \cite{Barenco1995Gates}, the quantum circuit required for computing the instantaneous amplitude of the phasor data in this small-scale test transpiles to just 1,565 single-qubit rotations and CNOTs, compared to at least 4,915,200 floating-point operations required to compute the Cooley-Tukey FFT and inverse FFT. The QHT result once again matches the FFT-based result to floating-point accuracy. The raw data for this case can be found in the IEEE data portal \cite{9xeg-4g78-25}. 
 
\subsection{Corner Detection in 2D} 
Finally, we verify the efficacy of two-dimensional QHT in a corner detection case. We encode a normalized, $1024\times 1024$, grayscale chessboard image $f_{x, y}$ for $x, y \in \{ 0, 1,\ldots,1023\}$ into the amplitudes of a quantum state on $n=20$ data qubits plus two ancillas. As shown in Figure \ref{fig:2d-fig}, the phase effect of the QHT correctly identifies all 64 corners of the chessboard as described by   \cite{Kohlmann1996Corner, Ziou1998EdgeDetection}.
\begin{figure}[h]
    \centering
    \addtocounter{figure}{1}
    \includegraphics[width=\linewidth]{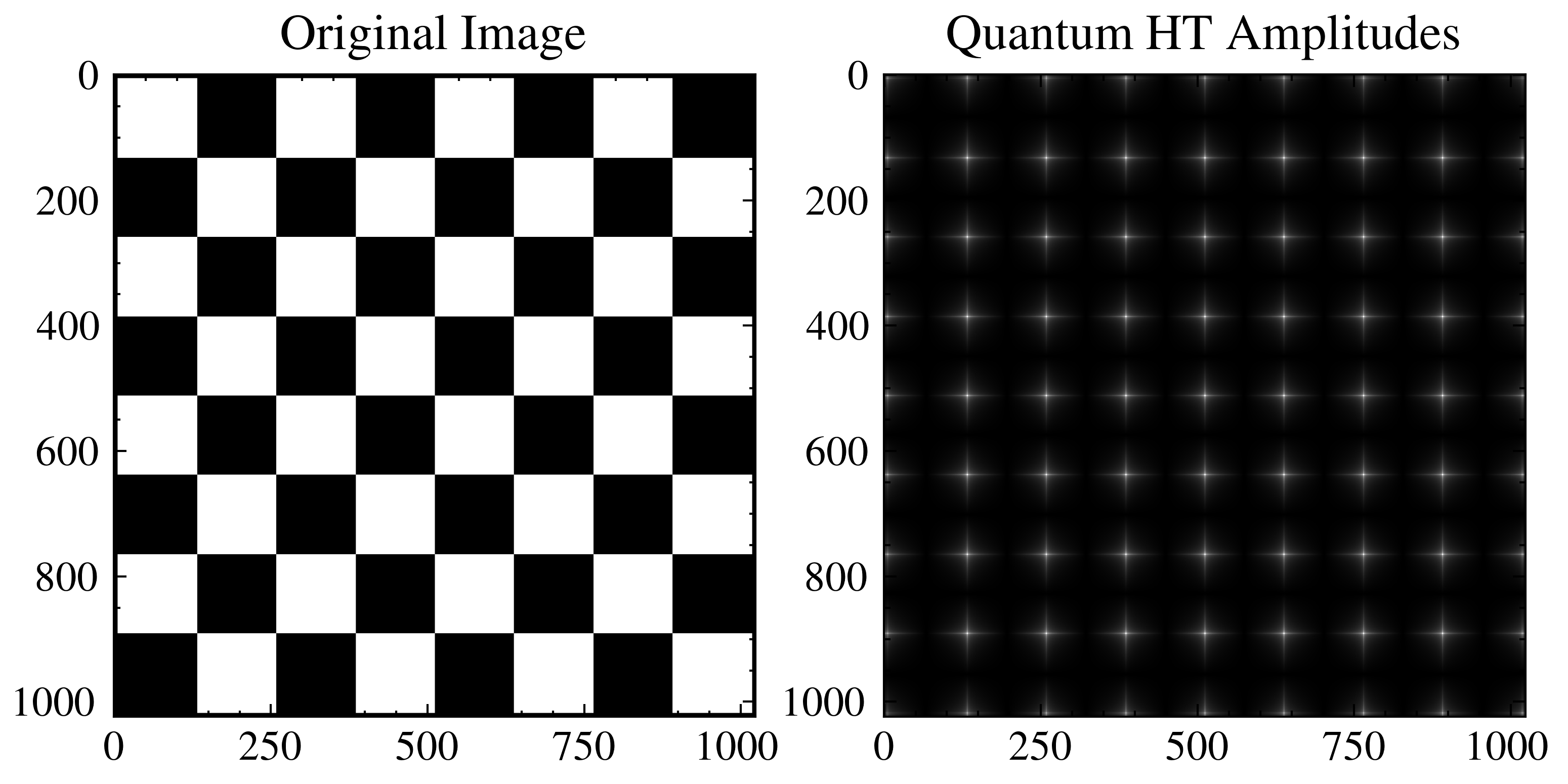}
    \caption{The absolute amplitudes of the quantum Hilbert transform detect the corners of a chessboard.}
    \label{fig:2d-fig}
 \end{figure}

In total, the simulated circuit for this test case used 1,590 rotation and CNOT primitives. In contrast, a classical separable 2D FFT plus quadrature filtering on a \(1024\times 1024\) array requires approximately $2.1\times10^8$ real operations. This reduction in required operations highlights the promise of the quantum Hilbert transform for multidimensional processing tasks.

\section{Conclusion}

We presented a novel quantum algorithm that implements the discrete Hilbert transform in any fixed dimension $d$ using only $O(\log^2 N)$-size and $O(\log N)$-depth circuits, by creating efficient subroutines for quantum DC filtering and quadrature phase shifts without arbitrary state multiplication. The proposed algorithm was validated through simulations on both one-dimensional analytic benchmarks and one- and two-dimensional processing tasks, where the quantum implementation achieves a substantial reduction in circuit complexity, requiring orders of magnitude fewer operations compared to classical FFT-based Hilbert transforms. Future work might investigate the integration of the QHT into data-intensive applications such as seismic signal analysis or dynamic protection schemes for power grids, and test its robustness on intermediate-scale quantum hardware.

\textit{Note added.} A paper by Jha and Parakh \cite{Jha2025Sigh} appeared on arXiv while this work was in the final stages of preparation. Their discussion introduces the term \textit{quantum Hilbert transform} but does not develop any circuit implementation or resource analyses. Our results were developed independently.

\bibliographystyle{IEEEtran}
\bibliography{ref}

\begin{thebibliography}{1}

\bibitem{clevewatrous2000QFT}
R. Cleve and J. Watrous, ``Fast parallel circuits for the quantum Fourier transform,'' in \emph{Proceedings 41st Annual Symposium on Foundations of Computer Science}, IEEE Comput. Soc., 2000, pp. 526–536.

\bibitem{Camps2020QFT}
D. Camps, R. Van Beeumen, and C. Yang, “Quantum Fourier transform revisited,” \emph{Numerical Linear Algebra with Applications}, vol. 28, no. 1, 2020.

\bibitem{daSilva2022MCX}
A. J. da Silva and D. K. Park, “Linear-depth quantum circuits for multiqubit controlled gates,” \emph{Physical Review A}, vol. 106, no. 4, 2022.

\bibitem{Barenco1995Gates}
A. Barenco, C. H. Bennett, R. Cleve, D. P. DiVincenzo, N. Margolus, P. Shor, T. Sleator, J. A. Smolin, and H. Weinfurter, ``Elementary gates for quantum computation,'' \emph{Physical Review A}, vol. 52, no. 5, pp. 3457–3467, 1995.

\end{thebibliography}

\newpage 
\beginsupplement

\makeatletter\let\thanks\IEEEorigthanks\makeatother

\title{Supplementary Material for ``Efficient \\ Quantum Circuits for the Hilbert Transform''}
\author{Henry Zhang and Joseph Li}

\IEEEaftertitletext{\vspace{-1.05\baselineskip}}
\IEEEsecondmaketitle
\thispagestyle{headings}

\section{Proofs of Lemmas 1 \& 2}

\begin{lemma}
Let $f \in \mathbb R^{N\times\cdots\times N}$ be a real order-$d$ tensor. There exists a quantum circuit which maps $|f\rangle \to |\mathcal H [f]\rangle$ with circuit size $O(dn^2)$ and depth $O(dn)$, using $dn+1$ qubits.
\end{lemma}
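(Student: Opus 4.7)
The plan is to directly analyze the four-step construction of Section II-B: (i) parallel QFTs on each of the $d$ registers, (ii) $d$ sequential MCX-plus-measurement blocks that isolate and discard the DC mode, (iii) a single layer of Pauli-$Z$ gates on the most significant qubit of each register, and (iv) parallel inverse QFTs. Since correctness was already argued in the main text, what remains for Lemma 1 is a resource accounting: I would bound the size, depth, and qubit count of each step separately and then sum.

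For steps (i) and (iv), I would invoke the standard fact that the $n$-qubit QFT has size $O(n^2)$ and depth $O(n)$; applying it independently on the $d$ disjoint registers gives total size $O(dn^2)$ and, because the registers do not overlap, depth $O(n)$ per step. Step (iii) contributes $d$ single-qubit $Z$ gates in parallel, so $O(d)$ size and $O(1)$ depth. The qubit count is immediate from the construction: $d$ registers of $n$ qubits each plus a single ancilla, for a total of $dn+1$ qubits, and the mid-circuit measurement/reset of the ancilla in step (ii) lets the same physical ancilla be reused across all $d$ MCX blocks.

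The only nontrivial part is step (ii), which controls the overall scaling. Each block is an $n$-controlled $X$ (on negative controls, which I would absorb into surrounding $X$ gates) with the shared ancilla as target, followed by a measurement and reset. I would appeal to the standard decomposition of a multi-controlled $X$ (e.g., Barenco et al.\ \cite{Barenco1995Gates}) into $O(n^2)$ one- and two-qubit gates of depth $O(n)$ when no additional clean workspace is assumed beyond the ancilla and data registers themselves. Since all $d$ blocks share the ancilla target and are separated by measurements, they must be executed sequentially, giving step (ii) total size $O(dn^2)$ and depth $O(dn)$. Summing across steps, the size is dominated by the QFTs and the MCX layer at $O(dn^2)$, and the depth is dominated by step (ii) at $O(dn)$, which is exactly the claim.

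The main obstacle, to the extent there is one, is justifying the MCX cost without assuming extra ancillas, since a sharper $O(n)$-size decomposition would require borrowed workspace; I would handle this either by citing the ancilla-free $O(n^2)$ decomposition or, if desired, by noting that qubits in other registers are idle during that block and could serve as borrowed-dirty ancillas without changing the asymptotic size or depth bounds stated.
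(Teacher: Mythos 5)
Your proposal is correct and follows essentially the same route as the paper's own proof: a step-by-step resource count using the standard $O(n^2)$-size, $O(n)$-depth QFT, the ancilla-free $O(n^2)$-size, $O(n)$-depth multi-controlled-$X$ decomposition applied $d$ times sequentially, and the observation that the Pauli-$Z$ layer is subsumed. The only difference is your optional remark about borrowed dirty ancillas, which the paper does not need and which does not change the bounds.
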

\begin{proof}
For steps (i) and (iv), each $n$-qubit QFT can be implemented using $O(n^2)$ one- and two-qubit gates with depth $O(n)$ \cite{clevewatrous2000QFT, Camps2020QFT}. The $d$-dimensional QFT is separable into $d$ parallel $n$-qubit QFTs, so the total size sums to $O(dn^2)$ and combined depth remains $O(n)$. For step (ii), each $n$-controlled-X can be synthesized with $O(n^2)$ elementary gates and depth $O(n)$ without additional ancillae \cite{daSilva2022MCX, Barenco1995Gates}. We use exactly $d$ open-MCX gates to flip the ancilla, which contributes $O(dn^2)$ and $O(dn)$ depth in total. Finally, step (iii) only adds $d$ Pauli-Zs with $O(1)$ depth, which is subsumed.

The overall circuit size is hence $O(dn^2)$ and the overall depth is $O(dn)$, using $dn+1$ qubits. When mid-circuit-measurements are not hardware-supported, we cannot reuse the ancilla, so the algorithm requires $dn+d$ qubits. However, asymptotic size and depth do not change. See Figure \ref{fig:circuit}.
\end{proof}

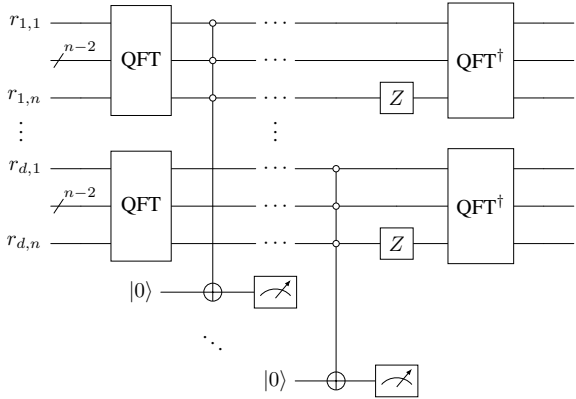
\begin{figure}[h]
    \begin{center}
    \hspace*{-.8em}
    \scalebox{0.8}{
        \begin{quantikz}[classical gap=0.06cm, row sep=0.6em, wire types={q,q,q,n,q,q,q,n,n,n}, thin lines]
            \lstick{$r_{1, 1}$} && \gate[3]{\kern0.08em\text{QFT}\kern0.08em} & \octrl{1} &\ \cdots \ &&&\gate[3]{\text{QFT}^\dagger}&&\\
            &\qwbundle[style={xshift=-1.2mm}]{n-2}&& \octrl{1} & \ \cdots \ &&&&& \\
            \lstick{$r_{1, n}$} &&& \octrl{5} & \ \cdots \ &&\gate{Z}&&& \\
            \lstick{\qvdots\kern0.5em} & & & & \push{\qvdots}&&& \\
            \lstick{$r_{d, 1}$} && \gate[3]{\kern0.08em\text{QFT}\kern0.08em} && \ \cdots \ &\octrl{1} && \gate[3]{\text{QFT}^\dagger} && \\
            &\qwbundle[style={xshift=-1.2mm}]{n-2} &&& \ \cdots \ &\octrl{1} &&&&\\
            \lstick{$r_{d, n}$} &&&& \ \cdots \ &\octrl{3} & \gate{Z} &&&\\
            && \ \push{{|0\rangle}}\ & \targ{} \setwiretype{q} & \meter{} \\
            &&& \qddots \\
            &&&& \ \push{{|0\rangle}}\ & \targ{} \setwiretype{q} &  \meter{}
        \end{quantikz}
    }
    \end{center}
    \caption{A simple implementation of the $d$-dimensional quantum Hilbert transform when dynamic circuits are not supported by hardware; $d$ ancillae are used to avoid increased depth.}
\label{fig:circuit}
\end{figure}

\begin{lemma}
Let $|f\rangle \in \mathbb C^{N^d}$ have Fourier coefficients $\hat{f}_{j_1,\ldots,j_d}$ defined in Equation $\textnormal{(8)}$ and spectral energies $S_{j_1,\ldots,j_d} = |\hat{f}_{j_1,\ldots,j_d}|^2$. Suppose the DC component satisfies
\begin{equation}
    \sum_{\substack{(j_1,\dots,j_d) \\ \exists m:j_m =0}}S_{j_1,\ldots,j_d} = p\sum_{j_1,\ldots,j_d=0}^{N-1} S_{j_1,\ldots,j_d}
\end{equation}
for some $p\in[0, 1]$. Then, $\mathbb{P}[\textnormal{success}] = 1-p$.
\end{lemma}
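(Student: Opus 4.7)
My plan is to reduce the sequential postselection in step (ii) to a single joint projector and then evaluate its Rayleigh quotient using the unitarity of the QFT together with complementation. The key identity to exploit is that, because the $d$ per-register MCX+measure+reset cycles act on disjoint subsystems, the rank-one projectors $\Pi_0^{(r)} = |0^n\rangle\langle 0^n|_r$ commute across $r$, so the cumulative Kraus operator corresponding to the all-successful branch is exactly $\Pi_{\text{keep}} = \bigotimes_{r=1}^{d}(I_r - \Pi_0^{(r)})$, as already noted in the main text. The success probability is therefore $\mathbb{P}[\text{success}] = \|\Pi_{\text{keep}}\,U_{\text{QFT}}^{\otimes d}|f\rangle\|^2$, where I treat the ancilla as traced out after its reset since it is deterministically in $|0\rangle$ on the success branch.

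Next, I would expand this norm in the computational basis of the Fourier-transformed registers. After the parallel QFTs the state is $\frac{1}{N^{d/2}}\sum_{j_1,\ldots,j_d}\hat f_{j_1,\ldots,j_d}|j_1\cdots j_d\rangle$, and $\Pi_{\text{keep}}$ kills exactly those basis vectors in which at least one $j_r$ equals $0$. Hence
\begin{equation}
\mathbb{P}[\text{success}] \;=\; \frac{1}{N^d}\sum_{\substack{(j_1,\ldots,j_d)\\ \forall r:\,j_r\neq 0}} S_{j_1,\ldots,j_d}.
\end{equation}

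To finish, I would invoke Parseval's theorem for the $d$-dimensional DFT: since $U_{\text{QFT}}^{\otimes d}$ is unitary and $|f\rangle$ is normalized, $\frac{1}{N^d}\sum_{j_1,\ldots,j_d}S_{j_1,\ldots,j_d} = 1$, i.e. the total spectral energy equals $N^d$. Splitting this total sum into the DC-containing index set and its complement and substituting the hypothesis $\sum_{\exists m:\,j_m=0} S_{j_1,\ldots,j_d} = p\sum_{\text{all}} S_{j_1,\ldots,j_d} = pN^d$ yields $\sum_{\forall r:\,j_r\neq 0} S_{j_1,\ldots,j_d} = (1-p)N^d$, and dividing by $N^d$ gives the claimed $\mathbb{P}[\text{success}] = 1-p$.

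The only real conceptual obstacle is justifying that performing $d$ sequential mid-circuit measurements (each with its own reset) produces the same success probability as a single application of the joint projector $\Pi_{\text{keep}}$. I would address this by noting that the per-register MCX acts trivially outside register $r\cup\text{ancilla}$, so after ancilla reset the Kraus operators for the "ancilla $=0$" outcomes on different $r$ are supported on disjoint factors and therefore commute, making the composition manifestly equal to $\Pi_{\text{keep}}$; the chain rule for conditional measurement probabilities then collapses to the single expression $\|\Pi_{\text{keep}}|\hat f\rangle\|^2$. Everything else is a one-line complementation against Parseval.
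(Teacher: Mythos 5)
Your proposal is correct and follows essentially the same route as the paper's proof: identify the post-QFT probability mass $N^{-d}S_{j_1,\ldots,j_d}$, recognize step (ii) as the joint projector $\Pi_{\text{keep}}=\bigotimes_r(I_r-\Pi_0^{(r)})$, and finish by complementation against Parseval's identity $\sum S = N^d$. Your extra justification that the $d$ sequential MCX/measure/reset cycles compose into the single joint projector (via commuting Kraus operators on disjoint registers) is a detail the paper simply asserts, but it does not change the argument.
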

\renewcommand{\theequation}{S\arabic{equation}}
\setcounter{equation}{0}
\begin{proof}
After the $d$-dimensional QFT, the probability mass on a frequency $(j_1,\ldots,j_d)$ is $|N^{-d/2}\hat f_{j_1,\ldots,j_d}|^2
= N^{-d}S_{j_1,\ldots,j_d}$. Step (ii) is equivalent to applying
\begin{equation}
    \bigotimes_{r=1}^d \bigl(I_r - \Pi^{(r)}_{0}\bigr)
    = \sum_{\substack{(j_1,\dots,j_d)\\ \forall r: j_r\neq 0}}
    |j_1\cdots j_d\rangle\langle j_1\cdots j_d|,
\end{equation}
the projector onto the subspace spanned by those $|j_1\cdots j_d\rangle$ with no DC component. The success probability is thus
\begin{equation}
\sum_{\substack{(j_1,\ldots,j_d)\\ \forall r : j_r\neq 0}}
\frac{1}{N^d}S_{j_1,\ldots,j_d}
=1-\sum_{\substack{(j_1,\ldots,j_d)\\ \exists r : j_r=0}}
\frac{1}{N^d}S_{j_1,\ldots,j_d}.
\end{equation}
By the hypothesis, the spectral energy of the DC component is a $p$ proportion of the total energy, and Parseval's theorem gives $\sum_{j_1,\ldots,j_d}S_{j_1,\ldots,j_d}=N^d$. That
\begin{equation}
\mathbb P[\textnormal{success}]=1-\frac{1}{N^d}\cdot p\sum_{j_1,\ldots,j_d}S_{j_1,\ldots,j_d}
=1-p
\end{equation}
immediately follows.
\end{proof}

\section{Code Availability}
The code to replicate our figures and numerics is available at \url{https://github.com/henryzhng/quantum-hilbert-transform}.

\end{document}